\documentclass[11pt]{article}
\ifdefined\pdfinfoomitdate\pdfinfoomitdate=1\fi
\ifdefined\pdftrailerid\pdftrailerid{}\fi
\ifdefined\pdfsuppressptexinfo\pdfsuppressptexinfo=15\fi
\usepackage[T1]{fontenc}
\usepackage{lmodern}
\usepackage[margin=1in]{geometry}
\usepackage{amsmath,amssymb,amsthm}
\usepackage{booktabs}
\usepackage[hidelinks]{hyperref}
\hypersetup{
  pdftitle={Exact Sizewise Least Cores for Candidate-Disjoint Approval Chains},
  pdfauthor={Jiarui Fang},
  pdfsubject={Approval-based committee elections, core stability, and the sizewise least core},
  pdfkeywords={approval-based committee elections, core stability, sizewise least core, candidate-interval preferences, dynamic programming}
}
\newtheorem{theorem}{Theorem}
\newtheorem{lemma}{Lemma}
\newtheorem{proposition}{Proposition}
\newtheorem{corollary}{Corollary}
\theoremstyle{definition}\newtheorem{example}{Example}
\theoremstyle{remark}
\newcommand{\eps}{\varepsilon}

\title{Exact Sizewise Least Cores for\\
Candidate-Disjoint Approval Chains}
\author{
Jiarui Fang\\
Boston University\\
Boston, Massachusetts, USA\\
\texttt{baymin@bu.edu}\\
\href{https://orcid.org/0009-0006-9100-0445}
     {ORCID: 0009-0006-9100-0445}
}
\date{3 September 2026}
\begin{document}
\maketitle
\begin{abstract}
For approval-based committee elections, the sizewise least-core value
$\Gamma(P)$ minimizes the maximum excess support of a deviation over
size-$k$ committees.  Deviations are nonempty and contain at most $k$
candidates, with support measured relative to their Hare quota.
A negative value certifies core nonemptiness, while $-\Gamma(P)$ is the
largest worst-case quota shortfall secured by any committee.
We quantify this margin on candidate-disjoint approval chains, a domain
whose qualitative core nonemptiness is already known.
We first factor each committee's deviation score over disjoint components.
For chains, a closed local table yields the exact value and an attaining
committee in polynomial time for rational weights.
For nonnegative normalized real weights, let $q_{\mathrm{act}}\ge1$ count
the connected components of the positive-support incidence graph.  Then
\[
 \Gamma(P)\le-\frac{1}{k\min\{q_{\mathrm{act}},k+1\}}.
\]
Explicit extremal profiles show that this dependence on the active-component
count is sharp.  The margin is $1$-Lipschitz in total variation on the full
ballot space.  Each negative margin therefore certifies a neighborhood with
nonempty core, even when perturbations leave the chain domain.
\end{abstract}

\section{Introduction}
The core rules out a coalition that can afford a strictly preferred
deviation at the Hare quota.  Its existence for arbitrary approval elections
remains open, despite progress for restricted parameters and preference
domains.  We isolate a decomposable chain domain while retaining the signed
excess support of the best deviation.  The central argument is that exact
prefix-packed local tables make the component reduction polynomial-time and
expose the sharp dependence on the intrinsic number of support-active
components.

For intuition, consider disjoint local candidate lists whose approval sets are
nested within each list.  Voters may then differ only in how far down one list
they approve.  A qualitative theorem says that some committee is unblocked.
It does not determine which committee maximizes the minimum shortfall of
deviation support below its Hare quota.
For a core-nonempty profile, $-\Gamma(P)$ is this largest guaranteed gap,
measured as a fraction of the electorate.
Our tables allocate seats across lists and compute that gap exactly.
The total-variation bound then turns it into a sufficient stability
certificate for arbitrary changes in the ballot distribution.
This certificate is not asserted to equal the optimal distance to core
emptiness.  The intrinsic count $q_{\mathrm{act}}$
ignores empty ballots and unused candidates, and controls a guaranteed
stability buffer.

The contribution has four layers.  First, a coalition-cost identity and a
careful over-budget argument give exact component factorization.  Second,
coordinatewise monotonicity reduces the global optimum to local tables and a
seat-allocation dynamic program.  Third, nested approvals make every local
table explicit, eliminating exponential local enumeration.  Fourth, the
closed table gives an order-statistic formula and heap reconstruction, and
converts demand minimality into the optimal
$\min\{q_{\mathrm{act}},k+1\}$ bound and matching extremal profiles.  A
fixed-candidate refinement separates the forced case $m=k$, while a
total-variation inequality extends each negative margin to nearby arbitrary
approval distributions.

Several boundaries are essential.  A deviation without strict supporters
attains the baseline $-1/k$.  Selected coalition mass need not equal the
actual supporter mass of its minimum-cost deviation.  Empty ballots,
zero-weight types, and null candidates must not enter $q_{\mathrm{act}}$.
Rational arithmetic yields a polynomial-time algorithm,
but no strong-polynomial bit bound is claimed.  Computation below is
falsification and regression evidence only; all quantitative results are
proved analytically.

\section{Model and coalition costs}
Let $C$ be finite, $1\le k\le|C|$, and let $N$ be a finite set of voter
types.  Type $i$ has approval set $A_i\subseteq C$ and weight $p_i\ge0$,
where $\sum_i p_i=1$.  Write $u_i(X)=|A_i\cap X|$.  A committee is a set
$W\subseteq C$ of size $k$, and an admissible deviation is a nonempty
$T\subseteq C$ with $|T|\le k$; it may overlap $W$.  Define
\[
 R_W(T)=\{i:u_i(T)>u_i(W)\},\qquad
 g_p(W,T)=p(R_W(T))-|T|/k,
\]
\[
 g_p(W)=\max_Tg_p(W,T),\qquad
 \Gamma(P)=\min_{|W|=k}g_p(W).
\]
For rational weights and $k<|C|$, $\Gamma(P)$ coincides with Berker et al.'s
fixed-profile sizewise-least-core value $\mu_{\mathbf{x}}^*$
~\cite[Section~3]{berker2026}.
Here $\mathbf{x}(A)=\sum_{i:A_i=A}p_i$ aggregates identical approvals;
the same definition also covers real weights and $k=|C|$.
In particular,
\[
 \begin{aligned}
 W\text{ is core-stable}&\iff g_p(W)<0,\\
 P\text{ has nonempty core}&\iff\Gamma(P)<0.
 \end{aligned}
\]
Blocking includes equality at the Hare quota, while utility improvement is
strict.
For a chosen committee, a value $g_p(W)=-\eta<0$ means that every deviation's
support falls short of its sizewise quota by at least $\eta$; larger $\eta$
is a stronger stability buffer.  For example, a singleton contained in $W$
has no strict supporter and score $-1/k$, whereas a deviation with support
exactly $|T|/k$ blocks at score zero.

For nonempty $S\subseteq N$, let $\tau_W(S)$ be the minimum size of an
admissible deviation that strictly improves every type in $S$, and set it to
$+\infty$ when none exists.

\begin{theorem}[Exact coalition-cost identity]\label{thm:coalition}
For every size-$k$ committee $W$,
\[
 g_p(W)=\max\left\{-\frac1k,
 \max_{\substack{\varnothing\ne S\subseteq N\\\tau_W(S)<\infty}}
 \left(p(S)-\frac{\tau_W(S)}k\right)\right\}.
\tag{1}
\]
\end{theorem}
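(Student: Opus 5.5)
We prove (1) by showing two inequalities between $g_p(W)=\max_T g_p(W,T)$ and the right-hand side, which we denote $h_p(W)$. The idea is to regroup deviations according to the set of voter types they strictly improve.

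\textbf{Step 1: $g_p(W)\ge h_p(W)$.} First, the baseline term. Since $|W|=k\ge1$, pick any $c\in W$ and take $T=\{c\}$. This is admissible, and no type has $u_i(T)>u_i(W)$, because $u_i(\{c\})=1$ forces $c\in A_i\cap W$ and hence $u_i(W)\ge1$. So $R_W(T)=\varnothing$ and $g_p(W,T)=-1/k$.

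Next, the coalition terms. Take a nonempty $S$ with $\tau_W(S)<\infty$, and let $T$ be a minimum-size admissible deviation strictly improving every type in $S$, so $|T|=\tau_W(S)$. Then $S\subseteq R_W(T)$. Since $p\ge0$, this gives $g_p(W,T)=p(R_W(T))-|T|/k\ge p(S)-\tau_W(S)/k$. Taking the maximum over all such $S$ yields $g_p(W)\ge h_p(W)$.

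\textbf{Step 2: $g_p(W)\le h_p(W)$.} Fix any admissible $T$ and set $S=R_W(T)$. There are two cases.

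If $S=\varnothing$, then $g_p(W,T)=-|T|/k\le -1/k$, because $T$ is nonempty. This is dominated by the baseline term.

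If $S\neq\varnothing$, then $T$ witnesses that $\tau_W(S)\le|T|<\infty$, so $S$ appears in the inner maximum of (1). Hence $g_p(W,T)=p(S)-|T|/k\le p(S)-\tau_W(S)/k\le h_p(W)$.

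Maximizing over $T$ gives $g_p(W)\le h_p(W)$. Both maxima range over finite sets, since $C$ and $N$ are finite, so they are attained.

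\textbf{Where the care lies.} The argument is short, and the only delicate points are bookkeeping ones that the paper flags.

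First, the selected coalition $S$ in Step 1 need not equal the supporter set $R_W(T)$ of its minimum-cost deviation. That set may be strictly larger. This is exactly why Step 1 needs an inequality through $p\ge0$ rather than an identity.

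Second, the baseline $-1/k$ must be justified as actually attained. That requires a deviation with no strict supporters, which is the singleton inside $W$ and relies on $k\ge1$.

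Third, zero-weight types cause no trouble, because both directions use only monotonicity of $p$.
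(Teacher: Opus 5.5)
Your proposal is correct and matches the paper's proof essentially step for step. Both bound a no-supporter deviation by $-1/k$, use $\tau_W(R_W(T))\le|T|$ for the upper inequality, and use nonnegative weights to show that a minimum-cost deviation for $S$ has supporter mass at least $p(S)$. Both also attain the baseline with a singleton $\{w\}\subseteq W$.
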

\begin{proof}
If $R_W(T)=\varnothing$, the score is $-|T|/k\le-1/k$.  Otherwise
$\tau_W(R_W(T))\le|T|$, which gives one inequality.  Conversely, a
minimum-cost deviation for $S$ may improve a strict superset of $S$, but its
actual supporter mass is at least $p(S)$.  Finally, for any $w\in W$, the
singleton $\{w\}$ has no strict supporter and attains $-1/k$.
\end{proof}

\section{Exact component factorization}
Assume
\[
 C=C_1\sqcup\cdots\sqcup C_h,\qquad
 N=N_1\sqcup\cdots\sqcup N_h,
\]
where every $C_j$ is nonempty and $A_i\subseteq C_j$ for $i\in N_j$.
We call the pairs $(C_j,N_j)$ the declared blocks.  They are input metadata,
not the intrinsic support-active components defined later, because empty
ballots and null candidates remain assigned to declared blocks.
Every candidate approved by no type is assigned to an existing block before
the partition is fixed; it creates no voter-bearing component but remains
counted in the local seat allocation.  Write $W_j=W\cap C_j$.  For nonempty
$S\subseteq N_j$, let $\tau_j(W_j,S)$ be the minimum size of a nonempty
$T_j\subseteq C_j$, $|T_j|\le k$, that strictly improves every type in $S$.
Put
\[
 \lambda_j(W_j)=
 \max_{\substack{\varnothing\ne S\subseteq N_j\\
                  \tau_j(W_j,S)<\infty}}
 \left(p(S)-\frac{\tau_j(W_j,S)}k\right),
\tag{2}
\]
with value $-\infty$ if no nonempty local coalition can improve.  For
extended-real $z=(z_1,\ldots,z_h)$ define
\[
 H(z)=\max\left\{-\frac1k,
      \max_{\varnothing\ne J\subseteq[h]}\sum_{j\in J}z_j\right\},
\tag{3}
\]
where a sum containing $-\infty$ is $-\infty$.

\begin{theorem}[Exact component factorization]\label{thm:factor}
For every size-$k$ committee $W$,
\[
 g_p(W)=H\bigl(\lambda_1(W_1),\ldots,\lambda_h(W_h)\bigr).
\tag{4}
\]
\end{theorem}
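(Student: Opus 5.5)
The plan is to compare both sides of (4) directly, using Theorem~\ref{thm:coalition} for the $-1/k$ baseline. The key structural fact is locality. Any $T\subseteq C$ splits as $T=\bigsqcup_j T_j$ with $T_j=T\cap C_j$. For $i\in N_j$ we have $A_i\subseteq C_j$, so $u_i(T)=|A_i\cap T_j|$ and $u_i(W)=|A_i\cap W_j|$. Hence $R_W(T)=\bigsqcup_j R_j$, where $R_j=\{i\in N_j:u_i(T_j)>u_i(W_j)\}$ depends only on $(T_j,W_j)$. Consequently $g_p(W,T)=\sum_j\bigl(p(R_j)-|T_j|/k\bigr)$.

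\emph{Upper bound $g_p(W)\le H(\lambda)$.} Fix an admissible $T$. If $R_W(T)=\varnothing$, the score is at most $-1/k$. Otherwise let $J=\{j:R_j\ne\varnothing\}$, which is nonempty. Blocks with $T_j\ne\varnothing$ but $R_j=\varnothing$ contribute $-|T_j|/k\le0$, so dropping them only increases the sum. Thus $g_p(W,T)\le\sum_{j\in J}\bigl(p(R_j)-|T_j|/k\bigr)$. For $j\in J$, the set $T_j$ is a nonempty subset of $C_j$ with $|T_j|\le|T|\le k$ that strictly improves every type in $R_j$. Hence $\tau_j(W_j,R_j)\le|T_j|$, and each summand is at most $\lambda_j(W_j)$. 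All these $\lambda_j$ are finite, so the sum is at most $H(\lambda)$.

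\emph{Lower bound $g_p(W)\ge H(\lambda)$.} The term $-1/k$ is attained by a singleton $\{w\}$ with $w\in W$, as noted after Theorem~\ref{thm:coalition}. Now fix a nonempty $J$ with all $\lambda_j(W_j)>-\infty$; otherwise the sum is $-\infty$ and there is nothing to show. For each $j\in J$, choose a maximizing coalition $S_j$ and a minimum local deviation $T_j$ with $|T_j|=\tau_j:=\tau_j(W_j,S_j)$. The main obstacle is the over-budget case, where $T=\bigsqcup_{j\in J}T_j$ may fail to be admissible.

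\emph{Over-budget case.} Suppose $\sum_{j\in J}\tau_j\ge k+1$. Since the $S_j$ are disjoint and $p(N)=1$,
\[
 \sum_{j\in J}\Bigl(p(S_j)-\frac{\tau_j}{k}\Bigr)\le 1-\frac{k+1}{k}=-\frac1k,
\]
so this term is already dominated by the baseline.

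\emph{Within-budget case.} Suppose $\sum_{j\in J}\tau_j\le k$. Then $T$ is nonempty with $|T|\le k$, so it is admissible. By locality, every type in each $S_j$ is strictly improved. The actual supporter set may be larger, but $p(R_W(T))\ge\sum_j p(S_j)$. Therefore $g_p(W)\ge g_p(W,T)\ge\sum_{j\in J}\lambda_j(W_j)$.

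Combining the two bounds gives (4).
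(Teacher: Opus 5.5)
Your proof is correct and follows essentially the same route as the paper's. For the upper bound you split $T$ by block and discard the pieces with no supporters. For the lower bound you unite minimum-cost local deviations for attaining coalitions when $\sum_j\tau_j\le k$, and otherwise use disjointness of the $S_j$ together with $p(N)=1$ to show that the over-budget sum is dominated by the $-1/k$ baseline, which the singleton $\{w\}$ with $w\in W$ attains.
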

\begin{proof}
Split a deviation into $T_j=T\cap C_j$.  Component supporter sets are
disjoint and costs add.  Delete pieces without local supporters.  Every
remaining local score is at most $\lambda_j(W_j)$, while a deviation with no
supporter scores at most $-1/k$.  This proves the upper bound.

For the reverse bound, fix a nonempty $J$ with finite coordinates.  Choose
for each $j\in J$ an attaining coalition $S_j$ in (2) and a minimum-cost
local deviation of size $c_j=\tau_j(W_j,S_j)$.  Then
\[
 \lambda_j(W_j)=p(S_j)-c_j/k.
\]
Here $p(S_j)$ is the selected coalition mass, not necessarily the actual
supporter mass of its deviation.  If $\sum_jc_j\le k$, the united deviation
is feasible and its actual supporter mass is at least
$\sum_jp(S_j)$, so its score is at least the selected subset sum.  If
$\sum_jc_j>k$, the union is not asserted to be feasible.  Instead,
$\sum_jc_j\ge k+1$ and the disjoint type blocks give
$\sum_jp(S_j)\le1$, whence
\[
 \sum_{j\in J}\lambda_j(W_j)
 \le1-(k+1)/k=-1/k.
\]
The baseline in (1) is already attained.  This proves (4), including
$\lambda_j=-\infty$, local saturation, zero weights, $h=1$, $k=1$, and
$k=|C|$.
\end{proof}

\section{Exact component dynamic programming}
Put $b_j=|C_j|$, $B_j=\min\{k,b_j\}$, and
\[
 F_j(s)=\min_{\substack{W_j\subseteq C_j\\|W_j|=s}}\lambda_j(W_j),
 \qquad0\le s\le B_j.
\tag{5}
\]
Store a local argmin with each entry.

\begin{theorem}[Exact component DP]\label{thm:dp}
\[
 \Gamma(P)=
 \min_{\substack{s_1+\cdots+s_h=k\\0\le s_j\le B_j}}
 H\bigl(F_1(s_1),\ldots,F_h(s_h)\bigr).
\tag{6}
\]
The union of local argmins for an optimal allocation is an attaining global
committee.
\end{theorem}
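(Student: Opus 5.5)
The plan is to combine Theorem~\ref{thm:factor} with coordinatewise monotonicity of $H$ and then to decompose the minimization over committees. Every size-$k$ committee $W$ determines the local pieces $W_j=W\cap C_j$ with sizes $s_j=|W_j|$. These satisfy $\sum_j s_j=k$ and $0\le s_j\le\min\{k,b_j\}=B_j$. Conversely, any feasible allocation $(s_1,\dots,s_h)$ together with any choice of $W_j\subseteq C_j$, $|W_j|=s_j$, gives a disjoint union $W$ of size exactly $k$. So the committees are in bijection with pairs consisting of a feasible allocation and a tuple of local subsets of the prescribed sizes. By Theorem~\ref{thm:factor},
\[
 \Gamma(P)=\min_{s}\ \min_{(W_j)_j:\,|W_j|=s_j} H\bigl(\lambda_1(W_1),\ldots,\lambda_h(W_h)\bigr).
\]

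The key step is that $H$ is nondecreasing in each extended-real coordinate. For fixed nonempty $J$, the sum $\sum_{j\in J}z_j$ is nondecreasing in each $z_j$, including under the convention that a sum containing $-\infty$ equals $-\infty$. A maximum of nondecreasing functions together with the constant $-1/k$ is nondecreasing. Given this, the inner minimum over independent local choices equals $H(F_1(s_1),\dots,F_h(s_h))$. In one direction, $\lambda_j(W_j)\ge F_j(s_j)$ for each $j$, so monotonicity gives $H(\lambda(W))\ge H(F(s))$. In the other direction, the finite choice sets guarantee that each $F_j(s_j)$ is attained by a stored argmin $W_j^*$. The committee $W^*=\bigcup_j W_j^*$ then has $\lambda_j(W_j^*)=F_j(s_j)$ exactly, so it achieves the value $H(F(s))$ by Theorem~\ref{thm:factor}.

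Taking the minimum over allocations gives (6). For an allocation $s^*$ that attains the outer minimum, which exists because there are finitely many allocations, the union of the stored local argmins is a size-$k$ committee with $g_p(W^*)=H(F(s^*))=\Gamma(P)$, so it is attaining.

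I expect the main obstacle to be bookkeeping rather than substance. First, the case $s_j=0$ must be allowed: $F_j(0)=\lambda_j(\varnothing)$ is well defined, and blocks receiving no seats still contribute through $H$. Second, the bound $s_j\le B_j$ loses nothing, since $s_j\le b_j$ is forced and $s_j\le k$ holds automatically. Third, the values $-\infty$ must be handled correctly inside the monotonicity argument. I would check these explicitly. Feasibility of some allocation follows from $k\le|C|=\sum_j b_j$.
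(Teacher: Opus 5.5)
Your proposal is correct and follows the paper's own argument. The paper likewise applies factorization via Theorem~\ref{thm:factor}, uses coordinatewise monotonicity of $H$ to bound every compatible committee below by $H(F_1(s_1),\ldots,F_h(s_h))$, attains that bound with independent local argmins, and then minimizes over allocations; your explicit bookkeeping (the correspondence between committees and allocation/local-subset tuples, the $-\infty$ convention, and the harmlessness of $s_j\le B_j$) only fills in details the paper leaves implicit.
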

\begin{proof}
$H$ is coordinatewise nondecreasing.  For a fixed seat allocation, every
compatible committee has local coordinate at least $F_j(s_j)$, and
independent local argmins attain all coordinates simultaneously.  Theorem
\ref{thm:factor} and minimization over allocations prove the result.
\end{proof}

The useful exact sign identity is
\[
 H(z)=\begin{cases}
 \sum_j\max\{z_j,0\},&\text{if some }z_j>0,\\
 \max\{-1/k,\max_jz_j\},&\text{if every }z_j\le0.
 \end{cases}
\tag{7}
\]
The baseline in the second branch cannot be omitted.  Given arbitrary local
tables, first restrict to all-nonpositive allocations.  Let $\beta_j(u)$ be the
smallest bottleneck after assigning $u$ seats among the first $j$ blocks.
Initialize $\beta_0(0)=-\infty$ and all other states to $+\infty$, and use
\[
 \beta_j(u)=\min_{\substack{0\le s\le\min\{B_j,u\}\\F_j(s)\le0}}
 \max\{\beta_{j-1}(u-s),F_j(s)\},
 \qquad 0\le u\le k.
\]
An empty minimum is $+\infty$.  If $\beta_h(k)<+\infty$, the answer is
$\max\{-1/k,\beta_h(k)\}$, which no positive-score allocation can improve.
Otherwise every feasible allocation has a positive entry.  Initialize
$\alpha_0(0)=0$ and all other states to $+\infty$, and compute
\[
 \alpha_j(u)=\min_{0\le s\le\min\{B_j,u\}}
 \bigl(\alpha_{j-1}(u-s)+\max\{F_j(s),0\}\bigr).
\]
Then $\alpha_h(k)$ is the answer.  Storing a minimizing predecessor recovers an
allocation in either branch.  Each recurrence uses
$O(k\sum_j(B_j+1))=O(hk^2)$ exact transitions.  The next two sections make
this algorithm unconditional for chains by computing every $F_j(s)$ in
closed form.

\section{Closed local tables for chains}
Within component $j$, merge duplicate approval sets by adding their weights.
Retain empty types and their weights in $N_j$, but omit the empty set from
the displayed nonempty chain because it can never strictly support a
deviation.  If a nonempty set remains, write
\[
 A_{j,1}\subsetneq\cdots\subsetneq A_{j,r_j}=D_j\subseteq C_j,
 \qquad D_j=\bigcup_{i\in N_j}A_i;
\tag{8}
\]
otherwise put $D_j=\varnothing$.  Candidates in $C_j\setminus D_j$ are
unapproved padding candidates.  For $0\le s\le B_j$ define the globally
normalized suffix mass
\[
 P_j^{>s}=\sum_{\substack{i\in N_j\\|A_i|>s}}p_i.
\tag{9}
\]

\begin{lemma}[Prefix dominance]\label{lem:prefix}
If $D_j\ne\varnothing$, then for every
$0\le s\le\min\{k,|D_j|\}$ there is a size-$s$ set
$W_{j,s}\subseteq D_j$ satisfying
\[
 u_i(W_{j,s})=\min\{s,|A_i|\}\quad\text{for every local type }i.
\tag{10}
\]
It weakly Pareto-dominates every other size-$s$ local committee and hence
minimizes $\lambda_j$ at that size.
\end{lemma}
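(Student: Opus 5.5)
Since the approval sets in the chain are nested, we can order the candidates of $D_j$ so that every $A_{j,\ell}$ is an initial segment. List the candidates of $A_{j,1}$ first, in any order. Then list those of $A_{j,2}\setminus A_{j,1}$, and continue up to $A_{j,r_j}\setminus A_{j,r_j-1}$. This gives an enumeration $d_1,\ldots,d_{|D_j|}$ of $D_j$ in which each $A_{j,\ell}$ equals $\{d_1,\ldots,d_{|A_{j,\ell}|}\}$. Set $W_{j,s}=\{d_1,\ldots,d_s\}$, which is possible because $s\le|D_j|$.

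For a local type $i$ with $A_i$ nonempty, $A_i$ is a prefix of length $|A_i|$. Its intersection with the prefix of length $s$ is therefore the prefix of length $\min\{s,|A_i|\}$, which gives (10). For an empty type both sides of (10) are $0$. This settles the existence claim.

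For dominance, let $W'_j\subseteq C_j$ be any size-$s$ set. Then $|A_i\cap W'_j|\le\min\{|W'_j|,|A_i|\}=\min\{s,|A_i|\}=u_i(W_{j,s})$ for every local type. So $W_{j,s}$ gives every local type weakly more utility than $W'_j$.

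It remains to show that $\lambda_j$ is antitone under this pointwise utility order. Take a nonempty $T_j\subseteq C_j$ and suppose it strictly improves type $i$ relative to $W_{j,s}$. Then $u_i(T_j)>u_i(W_{j,s})\ge u_i(W'_j)$, so $T_j$ also strictly improves $i$ relative to $W'_j$. Hence, for every coalition $S\subseteq N_j$, the set of admissible deviations that strictly improve all of $S$ against $W_{j,s}$ is contained in the corresponding set against $W'_j$. This gives $\tau_j(W_{j,s},S)\ge\tau_j(W'_j,S)$, with the convention that $+\infty$ is largest.

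Consequently, every term in the maximum (2) for $W_{j,s}$ that is finite is bounded above by the corresponding term for $W'_j$, which is then finite as well. It follows that $\lambda_j(W_{j,s})\le\lambda_j(W'_j)$. This includes the case where the left side is $-\infty$, since an empty maximum is the least possible value. Because $W'_j$ was an arbitrary size-$s$ local committee, $W_{j,s}$ attains the minimum $F_j(s)$ in (5).

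The main point needing care is the range of $s$ together with the padding candidates. The dominance argument never uses $W'_j\subseteq D_j$, so competitors containing unapproved candidates of $C_j\setminus D_j$ are covered automatically. The only restriction is that the constructed set must itself exist inside $D_j$, and this is what the hypothesis $s\le\min\{k,|D_j|\}$ guarantees. The remaining cases are routine. At $s=0$ the set is empty. Duplicate or merged types do not affect anything, since (10) depends only on $A_i$.
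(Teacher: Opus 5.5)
Your proof is correct and follows the paper's argument. Your layered prefix $\{d_1,\ldots,d_s\}$ is exactly a set with $A_{j,q-1}\subseteq W_{j,s}\subseteq A_{j,q}$ (the paper itself later builds $W_{j,s}$ from this layer order), and your dominance step rests on the same inclusion of strict-supporter sets. The one difference is minor: you push that inclusion through $\tau_j$ inside definition (2), comparing the same coalition $S$ on both sides, which makes the final ``maximizing'' step explicit and needs no sign condition on the weights there, whereas the paper compares supporter-bearing deviation scores using nonnegative weights and tacitly relies on the deviation-level form of $\lambda_j$.
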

\begin{proof}
For $s=0$, take the empty set.  For $s>0$, let $q$ be least with
$|A_{j,q}|\ge s$, set $A_{j,0}=\varnothing$, and choose
\[
 A_{j,q-1}\subseteq W_{j,s}\subseteq A_{j,q},\qquad |W_{j,s}|=s.
\tag{11}
\]
Existence follows from $|A_{j,q-1}|<s\le|A_{j,q}|$.  Lower types are
saturated; higher types approve all of $W_{j,s}$, proving (10).  No size-$s$
committee gives type $i$ more than $\min\{s,|A_i|\}$ utility.  Finally, if
$V_j$ is another size-$s$ committee, then for every deviation $T_j$,
\[
 \{i:u_i(T_j)>u_i(W_{j,s})\}
 \subseteq\{i:u_i(T_j)>u_i(V_j)\}.
\]
Nonnegative weights make every supporter-bearing score against $W_{j,s}$ no
larger than the corresponding score against $V_j$.  Maximizing gives the
claim about $\lambda_j$.
\end{proof}

\begin{theorem}[Closed chain tables]\label{thm:closed}
If $D_j\ne\varnothing$, then
\[
 \boxed{F_j(s)=P_j^{>s}-\frac{s+1}{k}}
 \quad(0\le s<\min\{k,|D_j|\}),
\tag{12}
\]
and
\[
 \boxed{F_j(s)=-\infty}
 \quad(\min\{k,|D_j|\}\le s\le B_j).
\tag{13}
\]
If $D_j=\varnothing$, every table entry is $-\infty$.
\end{theorem}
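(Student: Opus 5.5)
The plan is to reduce every table entry to a single explicit committee, then evaluate $\lambda_j$ on that committee by a direct supporter analysis. By definition (5), $F_j(s)$ is a minimum of $\lambda_j$ over size-$s$ local committees. Whenever some size-$s$ committee weakly Pareto-dominates all the others, $F_j(s)$ equals $\lambda_j$ at that committee. The reason is the inclusion of supporter sets used in Lemma~\ref{lem:prefix}, together with nonnegativity of weights.

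First I dispose of the degenerate cases. If $D_j=\varnothing$, every type in $N_j$ has an empty approval set. No $T_j$ can strictly improve any type, so every $\tau_j(W_j,S)=\infty$, and (2) gives $\lambda_j\equiv-\infty$.

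For (13), take $\min\{k,|D_j|\}\le s\le B_j$ and split into two subcases.
\begin{itemize}
\item If $s\ge|D_j|$, choose any size-$s$ set $W_j\supseteq D_j$; this is possible since $s\le b_j$. Then $u_i(W_j)=|A_i|$ for all $i$, which no deviation can exceed. So no type is ever a strict supporter, and $\lambda_j(W_j)=-\infty$.
\item Otherwise $s\ge k$, which with $s\le B_j\le k$ forces $s=k\le|D_j|$. Lemma~\ref{lem:prefix} then applies and gives $u_i(W_{j,k})=\min\{k,|A_i|\}$. Every admissible $T_j$ has $|T_j|\le k$, so $u_i(T_j)\le\min\{k,|A_i|\}$, and again no strict supporter exists.
\end{itemize}
In both subcases the minimum in (5) is $-\infty$.

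For (12), fix $0\le s<\min\{k,|D_j|\}$. Lemma~\ref{lem:prefix} gives $F_j(s)=\lambda_j(W_{j,s})$, where $u_i(W_{j,s})=\min\{s,|A_i|\}$.

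\emph{Upper bound.} A type with $|A_i|\le s$ is saturated and can never be a strict supporter. A type with $|A_i|>s$ is strictly improved by $T_j$ exactly when $|T_j\cap A_i|\ge s+1$, which forces $|T_j|\ge s+1$. Hence any coalition $S$ with finite $\tau_j$ lies inside $\{i:|A_i|>s\}$ and costs at least $s+1$. Each term in (2) is therefore at most $P_j^{>s}-(s+1)/k$.

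\emph{Matching lower bound.} Let $A^\ast$ be the smallest chain member with $|A^\ast|>s$; it exists because $|D_j|>s$. By nestedness, $A^\ast\subseteq A_i$ for every $i$ with $|A_i|>s$. Take $T_j\subseteq A^\ast$ with $|T_j|=s+1$; it is admissible since $s+1\le k$. This $T_j$ gives every type with $|A_i|>s$ utility $s+1>s$. So the coalition $S=\{i\in N_j:|A_i|>s\}$ is nonempty (possibly of zero weight), has $\tau_j=s+1$, and contributes exactly $P_j^{>s}-(s+1)/k$.

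The step needing the most care is the lower bound. It is the one place where nestedness is used beyond Lemma~\ref{lem:prefix}: a single $(s+1)$-set must improve all unsaturated types simultaneously, so that the whole suffix mass is realized at the minimal cost. It also requires checking that the coalition is nonempty as a set even when its weight is zero, so that the entry is finite rather than $-\infty$.
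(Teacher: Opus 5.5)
Your proposal is correct and follows essentially the same route as the paper. You reduce to the prefix committee $W_{j,s}$ via Lemma~\ref{lem:prefix}, bound every supporter-bearing coalition by $P_j^{>s}$ at cost at least $s+1$, realize that bound with $s+1$ candidates from the smallest chain set of size exceeding $s$, and get (13) by saturation (covering $D_j$ and padding when $s\ge|D_j|$, or using the deviation cap when $s=k<|D_j|$), including your remark that the zero-weight suffix coalition still yields a finite entry.
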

\begin{proof}
Fix the prefix committee $W_{j,s}$ in the range of (12).  Types with
$|A_i|\le s$ are saturated; every other type has utility $s$.  Hence a
supporter-bearing deviation costs at least $s+1$ and is supported only by
types counted in $P_j^{>s}$.  This gives the upper bound in (12).  Let
$A_{j,q}$ be the smallest chain set of size greater than $s$ and take any
$s+1$ candidates from it.  Every type of approval size greater than $s$
approves the deviation and improves from utility $s$, proving the matching
lower bound.  Lemma~\ref{lem:prefix} makes this the table minimum.

For (13), if $|D_j|\le k$, include all of $D_j$ and pad with null
candidates; every type is saturated.  If $k<|D_j|$, the branch contains only
$s=k$, and a prefix committee gives every unsaturated type utility $k$, the
maximum possible under the deviation cap.  If $D_j$ is empty, no type can
ever improve.  Thus no nonempty local coalition has finite cost.
\end{proof}

The proof includes $s=0$, the boundary $s=k$ or $s=|D_j|$, gaps in chain
cardinalities, zero-weight types, all-zero local mass, empty approvals,
padding candidates, and components larger than $k$.  A positive approval
carried only by zero-weight types yields a finite negative entry in (12),
whereas an empty active union yields $-\infty$.

\begin{example}[Two disjoint approval chains]\label{ex:twochains}
Let $k=2$.  In the first block, weights $1/4,1/4$ approve $\{a\}$ and
$\{a,b\}$; in the second, a weight-$1/2$ type approves $\{c,d\}$.
The suffix formula gives
\[
 (F_1(0),F_1(1),F_1(2))=(0,-3/4,-\infty),
 \quad
 (F_2(0),F_2(1),F_2(2))=(0,-1/2,-\infty).
\]
For instance, $\{a\}$ saturates the first type, while the second must use
$\{a,b\}$ to improve, giving local score $1/4-2/2=-3/4$.
The allocations $(2,0)$, $(1,1)$, and $(0,2)$ have respective values
$0$, $-1/2$, and $0$ in (6).
Thus $W=\{a,c\}$ attains $\Gamma(P)=-1/2$, maximizing the guaranteed
shortfall of deviation support below quota.
Qualitative existence supplies an unblocked committee; this calculation
also selects an optimal committee and quantifies its stability margin.
\end{example}

\section{Exact polynomial-time algorithm}
The input lists candidate indices, approval sets, rational weights in binary,
and the candidate and type partitions.
The algorithm verifies disjointness, local containment, and nesting using
the explicit incidence lists.  Empty types and null candidates
retain their declared block assignments.  Theorem~\ref{thm:closed} removes
local enumeration.  Sort and merge local
ballots, verify adjacent inclusions, bucket weights by approval size, compute
the suffix masses, and order candidates by successive chain layers.  The
first $s$ candidates in this order form $W_{j,s}$; append null candidates in
the saturated padding range.

\begin{theorem}[Exact chain algorithm]\label{thm:algorithm}
For rationally weighted candidate-disjoint chain profiles, the exact value
$\Gamma(P)$ and an attaining committee are computable in time polynomial in
the input encoding length.
\end{theorem}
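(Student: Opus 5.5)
The plan is to assemble Theorem~\ref{thm:algorithm} from Theorems~\ref{thm:dp} and~\ref{thm:closed}, the sign identity (7), and the two recurrences for $\beta$ and $\alpha$. Each step is checked to use polynomially many arithmetic operations on rationals whose bit lengths stay polynomial in the input encoding length $L$.

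\emph{Step 1: input validation and normalization.} Check that the candidate and type partitions are disjoint and cover $C$ and $N$. Check $A_i\subseteq C_j$ for $i\in N_j$. Inside each block, sort the approval sets by cardinality, merge equal sets by adding their weights, and verify that adjacent sets are strictly nested. Each check is a comparison of explicit incidence lists, so this step takes polynomial time. It also yields the chain (8), the union $D_j$, and $B_j=\min\{k,b_j\}$. We may assume $k\le|C|\le L$.

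\emph{Step 2: tables and the optimal value.} For each block, bucket the weights by approval size. Form the suffix masses $P_j^{>s}$ in (9) for $0\le s\le B_j$ as partial sums. Then fill in every entry $F_j(s)$ from (12)--(13). This needs $O(\sum_j(B_j+1))=O(|C|)$ entries, and each entry is a sum of at most $|N|$ input rationals plus a term $-(s+1)/k$. A sum of at most $|N|+1$ input rationals has bit length bounded by a polynomial in $L$: the common denominator is at most the product of the input denominators, and its size is polynomial in $L$.

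Next run the $\beta$ recurrence and then, if needed, the $\alpha$ recurrence. Each uses $O(hk^2)$ transitions. A $\beta$ transition only takes $\max$ or $\min$ of existing table entries. An $\alpha$ value is a sum of at most $h$ positive parts of table entries, so it is again a sum of polynomially many input rationals plus small fractions, with polynomial bit length. By (7) and the argument given after it, the output $\max\{-1/k,\beta_h(k)\}$ or $\alpha_h(k)$ equals the right-hand side of (6). Theorem~\ref{thm:dp} then says this equals $\Gamma(P)$. The entries equal to $-\infty$ are handled symbolically.

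\emph{Step 3: recovering an attaining committee.} Follow the stored predecessors back to get an optimal allocation $(s_j)$. For each $j$ we need a local argmin of size $s_j$. In the range of (12), Lemma~\ref{lem:prefix} says the prefix set $W_{j,s_j}$ is one; it consists of the first $s_j$ candidates when the candidates are ordered by successive chain layers. In the range of (13), take the prefix of $D_j$ of length $\min\{s_j,|D_j|\}$ and pad with null candidates from $C_j\setminus D_j$. This padding is possible because $s_j\le b_j$. The proof of (13) shows this committee attains $-\infty$. If $D_j=\varnothing$, any $s_j$ candidates of $C_j$ will do. By Theorem~\ref{thm:dp}, the union of these local sets attains $\Gamma(P)$.

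\emph{Main obstacle.} The delicate point is keeping the bit-length control honest, since no strongly polynomial bound is claimed. I would state it explicitly: every number the algorithm produces is a $\pm$-sum of at most $|N|+h+1$ terms, each an input weight or of the form $c/k$ with $c\le k+1$. Such a rational has numerator and denominator of bit length $O(L^2)$, and the $\max$/$\min$ comparisons do not enlarge numbers. With this observation, all arithmetic is polynomial, and the total running time is polynomial in $L$.
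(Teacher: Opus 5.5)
Your proposal is correct and follows the paper's route: the closed tables of Theorem~\ref{thm:closed} feed the general two-branch component DP of Theorem~\ref{thm:dp}, and prefix committees from Lemma~\ref{lem:prefix}, padded with null candidates in the saturated range, reconstruct an attaining committee. You spell out the validation and bit-length bookkeeping that the paper compresses into one sentence, and you correctly do not rely on the sharper order-statistic and heap implementation of Proposition~\ref{prop:order}, which the paper mentions only as an alternative.
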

\begin{proof}
The closed formula constructs every table and a compact prefix argmin, so
the general component DP in Theorem~\ref{thm:dp} already proves the claim.
The chain-specific order-statistic implementation below is sharper and
returns the same value and an attaining allocation directly.  Exact rational
addition and comparison have polynomial bit complexity in the binary input
length; strong polynomiality is not claimed.
\end{proof}

\begin{proposition}[Order statistic and heap reconstruction]
\label{prop:order}
Let
\[
 \mathcal M=\biguplus_{j=1}^h
 \{F_j(0),F_j(1),\ldots,F_j(B_j-1)\},
\tag{14}
\]
where $\sum_jB_j\ge k$ because $\sum_j|C_j|=m\ge k$.  Let
$\theta_{k+1}$ be its $(k+1)$st largest member, with
$\theta_{k+1}=-\infty$ when $|\mathcal M|=k$.  Then
\[
 \boxed{\Gamma(P)=\max\{-1/k,\theta_{k+1}\}.}
\tag{15}
\]
A max-heap that advances the component with the largest current head for
$k$ iterations recovers an attaining seat allocation.
\end{proposition}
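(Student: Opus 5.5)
The plan is to read (6) as a "peel $k$ entries off $h$ decreasing lists" problem and compare every allocation with the top $k$ members of $\mathcal M$. Three structural facts about the closed tables of Theorem~\ref{thm:closed} come first.

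\emph{Fact 1 (monotonicity).} Each column $F_j(0),F_j(1),\ldots,F_j(B_j)$ is nonincreasing. It is strictly decreasing on the finite range of (12), because $P_j^{>s}$ is nonincreasing and $(s+1)/k$ increases. Once an entry is $-\infty$, all later entries are $-\infty$.

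\emph{Fact 2 (terminal entry).} $F_j(B_j)=-\infty$ always. If $B_j=k$, then $k\ge\min\{k,|D_j|\}$. If $B_j=b_j$, then $b_j\ge|D_j|$. In both cases $s=B_j$ falls in the range of (13), or $D_j=\varnothing$.

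\emph{Fact 3 (few positive entries).} Let $t_j$ be the number of positive entries of column $j$. By Fact~1 these form a prefix $0,\ldots,t_j-1$. If $t_j>0$, then $P_j^{>0}\ge P_j^{>t_j-1}>t_j/k$, so $t_j<kP_j^{>0}$. The $P_j^{>0}$ are masses of disjoint type blocks, so summing gives $\sum_jt_j<k\sum_jP_j^{>0}\le k$ whenever some $t_j>0$. Hence $\mathcal M$ has at most $k-1$ positive members, and in particular $\theta_{k+1}\le0$.

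\emph{Lower bound.} Fix any feasible allocation $(s_j)$ in (6). It ``uses'' the entries $F_j(0),\ldots,F_j(s_j-1)$, which are at most $k$ members of $\mathcal M$ in total. So among the $k+1$ largest members of $\mathcal M$ (with multiplicity), at least one, say $F_j(t)$, is unused; thus $t\ge s_j$. By Fact~1, $F_j(s_j)\ge F_j(t)\ge\theta_{k+1}$. From (3), $H(z)\ge\max\{-1/k,\max_jz_j\}$ (take $J$ a singleton). Therefore the allocation's value is at least $\max\{-1/k,\theta_{k+1}\}$. When $|\mathcal M|=k$, the bound reads $\ge-1/k$, which is immediate from (3).

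\emph{Upper bound and heap.} Run the max-heap: keep the current head $F_j(s_j)$ of each column, and $k$ times advance the component whose head is largest. By Fact~1 each head dominates the rest of its column, so the $k$ popped entries are a multiset of $k$ largest members of $\mathcal M$, with ties harmless. Each resulting head is one of two things. It may be an unpopped member of $\mathcal M$, hence $\le\theta_{k+1}$. Otherwise it is $F_j(B_j)=-\infty$ by Fact~2, which also covers $|\mathcal M|=k$. Since $\sum B_j\ge k$, the heap never runs out, and the allocation satisfies $s_j\le B_j$.

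By Fact~3 every head is $\le\theta_{k+1}\le0$, so the second branch of (7) applies. The value is therefore $\max\{-1/k,\max_jF_j(s_j)\}\le\max\{-1/k,\theta_{k+1}\}$. Combining with the lower bound gives (15). Theorem~\ref{thm:dp} then shows that the union of prefix argmins $W_{j,s_j}$ (padded per Lemma~\ref{lem:prefix}) attains $\Gamma(P)$.

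\emph{Main obstacle.} The delicate step is Fact~3. Without it, a positive leftover head could make $H$ a sum of positives exceeding $\theta_{k+1}$. The strictness $t_j<kP_j^{>0}$, combined with the global normalization $\sum_iP_i\le1$, is exactly what prevents this. A secondary check is that $-\infty$ entries and ties behave consistently in both the multiset order statistic and the heap.
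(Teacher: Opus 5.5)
Your proposal is correct and follows essentially the paper's argument: nonincreasing columns ending in $F_j(B_j)=-\infty$, any allocation leaving a head at least $\theta_{k+1}$, the greedy heap removing exactly the $k$ largest members of $\mathcal M$, and a normalization count placing the greedy heads in the bottleneck branch of (7). The differences are minor. You count positive members (at most $k-1$, giving $\theta_{k+1}\le0$, which suffices for (15)), whereas the paper counts nonnegative members (at most $k$, giving the slightly stronger $\theta_{k+1}<0$), and you read the lower bound directly off singleton $J$ in (3) rather than via the sign case split.
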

\begin{proof}
On the finite branch,
\[
 F_j(s+1)-F_j(s)=-\frac1k-
 \sum_{\substack{i\in N_j\\|A_i|=s+1}}p_i<0,
\]
and Theorem~\ref{thm:closed} gives $F_j(B_j)=-\infty$.  Thus allocating
$s_j$ seats removes the prefix of length $s_j$ from a nonincreasing transition
sequence, and the largest remaining head is $F_j(s_j)$.  Any $k$ prefix
removals leave a head at least $\theta_{k+1}$; repeatedly removing the
largest current head removes the $k$ largest members of $\mathcal M$, so the
greedy allocation attains that threshold.  Prefix closure is automatic.

It remains to justify the bottleneck branch.  Let $d_j$ be the number of
nonnegative members in component $j$'s transition sequence, and let
$\mu_j$ be the mass of its positive-weight nonempty ballots.  If $d_j>0$,
then
\[
 0\le F_j(d_j-1)=P_j^{>d_j-1}-d_j/k,
 \qquad d_j/k\le\mu_j.
\]
Since $\sum_j\mu_j\le1$, at most $k$ transitions are nonnegative.  Hence
$\theta_{k+1}<0$, or it is $-\infty$, and the greedy allocation has only
negative local heads.  Any allocation with a nonnegative head has
$H\ge0$, whereas the greedy one has value
$\max\{-1/k,\theta_{k+1}\}<0$; the factorization therefore proves the
formula.  Ties, exhausted sequences, null padding, and $|\mathcal M|=k$ use
the same convention $F_j(B_j)=-\infty$.
\end{proof}

Initialize the heap with $F_j(0)$ for every $B_j>0$, increment the extracted
component, and insert its next head only while its seat count is below
$B_j$.  Stored prefix committees reconstruct the global committee.
Let $t=|N|+\sum_{i\in N}|A_i|$ count input type records and listed approval
incidences.  Given a validated decomposition with distinct local approval
sets already ordered by inclusion, suffix construction and heap selection
take $O(t+m+(h+k)\log(h+1))$ exact arithmetic/comparison operations.
This bound does not include validation, sorting, or duplicate merging.
Including that preprocessing, the running time is polynomial in the explicit
incidence encoding length and the binary length of the rational weights.
The two-branch DP remains necessary for
arbitrary supplied local tables: nonincreasing abstract tables alone need not
admit an all-nonpositive optimum.

\section{Optimal quantitative bound}
The \emph{positive-support incidence graph} $G_p$ has as vertices the types
$i$ with $p_i>0$ and $A_i\ne\varnothing$ and the candidates approved by at
least one such type; incidence gives its edges.  Thus zero-weight types,
positive-weight empty ballots, and candidates approved by no positive
nonempty type are absent.  Let
\[
 q_{\mathrm{act}}=\text{the number of connected components of }G_p,
\tag{16}
\]
with $q_{\mathrm{act}}=0$ when the graph is empty.  This definition is
intrinsic: reassigning empty ballots, zero-weight types, or null candidates
among declared blocks cannot change it, nor can any metadata coarsening or
refinement.  Whenever the resulting partition is again a valid chain
decomposition, the block characterization below still holds.  In any
nonempty nested chain, a
largest positive-weight approval contains every candidate retained in
$G_p$, so the retained subgraph is connected.  Consequently every valid
chain decomposition has exactly one graph component for each block with
\[
 \mu_j=\sum_{\substack{i\in N_j\\p_i>0,\ A_i\ne\varnothing}}p_i>0.
\]
Isolated candidates still contribute seat capacity, but not support
geometry.

\begin{theorem}[Canonical active-component bound]\label{thm:strong}
If $q_{\mathrm{act}}=0$, then $\Gamma(P)=-1/k$.  If
$q_{\mathrm{act}}\ge1$, then every nonnegative real normalized weight vector
on a candidate-disjoint chain profile satisfies
\[
 \boxed{\Gamma(P)\le
 -\frac{1}{k\min\{q_{\mathrm{act}},k+1\}}.}
\tag{17}
\]
\end{theorem}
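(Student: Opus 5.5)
The plan is to read the bound off the order-statistic formula of Proposition~\ref{prop:order}, $\Gamma(P)=\max\{-1/k,\theta_{k+1}\}$. That formula rests on Theorems~\ref{thm:factor}, \ref{thm:dp} and~\ref{thm:closed}, none of which uses rationality, so it holds for real nonnegative normalized weights. It then suffices to show that the multiset $\mathcal M$ of (14) has at most $k$ members strictly above the threshold $-1/(kq)$, where $q=\min\{q_{\mathrm{act}},k+1\}$. Indeed, at most $k$ such members force $\theta_{k+1}\le-1/(kq)$, and $-1/k\le-1/(kq)$ as well.

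\textbf{Case $q_{\mathrm{act}}=0$.} Every block has $\mu_j=0$. Each type with $|A_i|>s\ge0$ is nonempty, so it has zero weight, and hence $P_j^{>s}=0$. Every finite entry in (12) is then $F_j(s)=-(s+1)/k\le-1/k$, so $\theta_{k+1}\le-1/k$ and (15) gives $\Gamma(P)=-1/k$.

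\textbf{Case $q_{\mathrm{act}}\ge1$: reduction to active blocks.} By the block characterization preceding the theorem, the active blocks, namely those with $\mu_j>0$, are exactly $q_{\mathrm{act}}$ in number. On an inactive block the same computation as above gives entries at most $-1/k\le-1/(kq)$, so these blocks contribute nothing above the threshold.

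For an active block $j$, let $c_j$ be the number of entries $F_j(s)$ exceeding $-1/(kq)$. By the strict decrease shown in the proof of Proposition~\ref{prop:order}, these entries are exactly $s=0,\ldots,c_j-1$. Using $P_j^{>s}\le\mu_j$ at $s=c_j-1$ gives
\[
 \frac{c_j}{k}-\frac{1}{kq}<\mu_j .
\]
We also have $\sum_j\mu_j\le1$.

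\textbf{Subcase $q_{\mathrm{act}}\le k+1$, so $q=q_{\mathrm{act}}$.} Summing the displayed inequality over the $q_{\mathrm{act}}$ active blocks gives $\sum_jc_j/k<1+q_{\mathrm{act}}/(kq)=1+1/k$. Hence $\sum_jc_j<k+1$, and integrality gives $\sum_jc_j\le k$.

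\textbf{Subcase $q_{\mathrm{act}}>k+1$, so $q=k+1$.} Summing naively adds one term $1/(k(k+1))$ per active block, which is too lossy. Instead I only count blocks with $c_j\ge1$. For these,
\[
 \frac{c_j}{k}-\frac{1}{k(k+1)}\ \ge\ \frac{c_j}{k}-\frac{c_j}{k(k+1)}=\frac{c_j}{k+1},
\]
so $c_j<(k+1)\mu_j$. Summing over those blocks gives $\sum_jc_j<k+1$, hence again $\sum_jc_j\le k$.

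The main obstacle is this last subcase. The additive slack $1/(kq)$ must be absorbed multiplicatively into $c_j$, and this works exactly because $q=k+1$. That explains why the bound saturates at $\min\{q_{\mathrm{act}},k+1\}$.

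The only remaining bookkeeping is the $-\infty$ entries and the convention $\theta_{k+1}=-\infty$ when $|\mathcal M|=k$. Both are harmless, because they only lower $\theta_{k+1}$.
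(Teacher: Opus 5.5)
Your proof is correct, but it finishes by a different route from the paper. The counting core is the same. Your $c_j$ is exactly the paper's weak demand $d_j=\min\{s:F_j(s)\le-1/(kr)\}$, where $r=\min\{q_{\mathrm{act}},k+1\}$ is your $q$. Your displayed inequality is (18). Your two subcases match the paper's two cases: $r=q_{\mathrm{act}}$, and $r=k+1$ using that the number of positive demands is at most their sum.

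The divergence is in how the count becomes a bound on $\Gamma(P)$. The paper does not use Proposition~\ref{prop:order}. Having shown $\sum_j d_j\le k$, it unites size-$d_j$ prefix witnesses and pads to $k$ candidates, noting that extra candidates only shrink strict supporter sets. It then applies Theorem~\ref{thm:factor} with every coordinate at most $-1/(kr)$. It also settles $q_{\mathrm{act}}=0$ directly, since every deviation then has zero supporter mass.

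You instead read the bound off $\Gamma(P)=\max\{-1/k,\theta_{k+1}\}$. This is legitimate: that proposition is proved before Theorem~\ref{thm:strong}, does not depend on it, and never uses rationality. It reduces the theorem to a threshold count on $\mathcal M$.

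The paper's version is independent of the greedy exchange and nonpositive-branch reduction in Proposition~\ref{prop:order}, and it exhibits an explicit witness committee. Yours is shorter, and it shows that the heap-greedy committee already achieves the bound.

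One small omission: you sum the displayed inequality over all active blocks in the first subcase. For blocks with $c_j=0$ it does not come from $s=c_j-1$; it holds trivially because $\mu_j\ge0>-1/(kq)$, and you should say so.
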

\begin{proof}
If $q_{\mathrm{act}}=0$, only zero-weight types can have nonempty approval.
Every deviation therefore has supporter mass zero and score at most $-1/k$;
the no-support singleton baseline attains equality.

Now put $q=q_{\mathrm{act}}$, $r=\min\{q,k+1\}$, and
$\eps=1/(kr)$.  Define the weak demand
$d_j=\min\{s:F_j(s)\le-\eps\}$.  It is finite by (13).  If $\mu_j=0$, then
$d_j=0$: either $D_j=\varnothing$ and $F_j(0)=-\infty$, or only zero-weight
nonempty types remain and $F_j(0)=-1/k\le-\eps$.  If $d_j>0$, minimality and
(12) give, with $Q_j=kr\mu_j$,
\[
 d_j>0\quad\Longrightarrow\quad
 Q_j\ge krP_j^{>d_j-1}>rd_j-1.
\tag{18}
\]
Here $P_j^{>d_j-1}\le\mu_j$ because $d_j-1\ge0$.  Crucially,
$\sum_j\mu_j\le1$, so $\sum_jQ_j\le kr$; positive mass on empty ballots
need not be supportable.

Let $D=\sum_jd_j$ and let $z$ count its positive terms.  Then $z\le q$.
If $D\ge k+1$, summing (18) gives
$kr\ge\sum_{d_j>0}Q_j>rD-z$.  If $q\le k+1$, then $r=q$ and
$rD-z\ge q(k+1)-q=qk=kr$.  If $q\ge k+1$, then $r=k+1$, $z\le D$, and
$rD-z\ge kD\ge k(k+1)=kr$.  Both contradict the strict inequality, so
$D\le k$.

Unite size-$d_j$ local witnesses and pad to $k$ candidates.  Adding
candidates can only shrink strict supporter sets, so every local value stays
at most $-\eps$.  Theorem~\ref{thm:factor}, together with
$-1/k\le-\eps$, proves (17).  The argument includes inert-only blocks,
null candidates, $q_{\mathrm{act}}=1$, and $q_{\mathrm{act}}>k+1$.
\end{proof}

\section{Sharpness and extremal profiles}
For fixed $k,h\ge1$, let $\mathcal P^{\mathrm{act}}_{k,h}$ contain all finite
candidate-disjoint chain profiles with committee size $k$, at most $h$
connected components in $G_p$, and $|C|\ge k$.  Candidate blocks,
capacities, incidence, inert objects, and weights may vary within this
extremal class.

\begin{theorem}[Exact extremal value]\label{thm:extremal}
The maximum is over all profiles in $\mathcal P^{\mathrm{act}}_{k,h}$,
allowing candidate count, blocks, incidence, and normalized weights to vary.
 Put $r=\min\{h,k+1\}$.  Then
\[
 \boxed{\max_{P\in\mathcal P^{\mathrm{act}}_{k,h}}\Gamma(P)=-\frac1{kr}.}
\tag{19}
\]
\end{theorem}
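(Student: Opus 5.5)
The plan has two parts. The upper bound in (19) will follow from Theorem~\ref{thm:strong}. The matching lower bound will come from an explicit family of profiles whose value I compute exactly with Proposition~\ref{prop:order}.

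\emph{Upper bound.} Fix $P\in\mathcal P^{\mathrm{act}}_{k,h}$ and write $q=q_{\mathrm{act}}\le h$. If $q=0$, Theorem~\ref{thm:strong} gives $\Gamma(P)=-1/k$, and $-1/k\le-1/(kr)$ because $r\ge1$. If $q\ge1$, then $\min\{q,k+1\}\le\min\{h,k+1\}=r$, so (17) yields
\[
\Gamma(P)\le-\frac{1}{k\min\{q,k+1\}}\le-\frac1{kr}.
\]
Hence the supremum over the class is at most $-1/(kr)$.

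\emph{Extremal construction.} I take $r$ blocks. Choose positive integers $a_1,\dots,a_r$ with $\sum_ja_j=k+1$; this is possible because $r\le k+1$. Block $j$ has candidate set $C_j$ with $|C_j|=a_j$ and a single voter type approving all of $C_j$, with weight
\[
p_j=\frac{a_j}{k}-\frac1{kr}=\frac{ra_j-1}{kr}\ge0 .
\]
The weights sum to $(k+1)/k-1/k=1$. Each block is a one-set chain with $\mu_j=p_j$, and $|C|=k+1\ge k$. So $P\in\mathcal P^{\mathrm{act}}_{k,h}$ as long as every $p_j>0$, which holds unless $r=a_j=1$; in that degenerate case the profile still has $q_{\mathrm{act}}\le h$. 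By (12), the finite table entries are
\[
F_j(s)=p_j-\frac{s+1}{k},\qquad 0\le s<\min\{k,a_j\}.
\]
Each block's entries are strictly decreasing, and the smallest equals $p_j-a_j/k=-1/(kr)$.

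\emph{Evaluating $\Gamma$.} If $r\ge2$, each $a_j\le k$, so $B_j=a_j$ and the multiset $\mathcal M$ in (14) has exactly $k+1$ members. Then $\theta_{k+1}$ is the minimum of $\mathcal M$, which is $-1/(kr)\ge-1/k$, and (15) gives $\Gamma(P)=-1/(kr)$. If $r=1$, then $a_1=k+1$ and $B_1=k$, so $|\mathcal M|=k$ and $\theta_{k+1}=-\infty$. Formula (15) then gives $\Gamma(P)=-1/k=-1/(kr)$, matching the target.

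The main point requiring care is calibrating the weights so that all per-block minima coincide at $-1/(kr)$ while the weights still sum to one. The identity $\sum_j(a_j/k-1/(kr))=1$, which uses $\sum_ja_j=k+1$, is exactly what makes this work. It is also why $r$ cannot exceed $k+1$: in that regime one simply uses $k+1$ singleton blocks, each with weight $1/(k+1)$.
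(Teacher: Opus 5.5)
Your proof is correct and matches the paper's. The upper bound is deduced from Theorem~\ref{thm:strong} exactly as you do it, and the witness is the same calibrated composition $a_1+\cdots+a_r=k+1$ with weights $(ra_j-1)/(kr)$.

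The differences are minor. You evaluate the witness with the order statistic (15), whereas the paper uses a direct pigeonhole-plus-factorization argument. This lets you keep the single $(k+1)$-candidate block when $r=1$, where the paper switches to a one-candidate witness because the full deviation would exceed the cap. Also, the exceptional case $r=a_j=1$ that you flag never occurs, because $r=1$ forces $a_1=k+1\ge2$.
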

\begin{proof}
If $q_{\mathrm{act}}=0$, the value is $-1/k\le-1/(kr)$.  Otherwise
$q_{\mathrm{act}}\le h$ and Theorem~\ref{thm:strong} gives
\[
 \Gamma(P)\le-\frac1{k\min\{q_{\mathrm{act}},k+1\}}
 \le-\frac1{kr},
\]
with the displayed sign direction.  For $r=1$, a weight-one type approving
one candidate, together with $k-1$ null candidates, is a one-component
witness of value $-1/k$; the deviation cap is why the composition argument
below is not used.  Suppose $r\ge2$.  Choose positive integers
$d_1+\cdots+d_r=k+1$, give component $j$ exactly $d_j$ approved candidates,
and give its full-set type weight
\[
 p_j=\frac{d_j}{k}-\frac1{kr}=\frac{rd_j-1}{kr}.
\tag{20}
\]
Every approval is nonempty, the weights are positive and sum to one, and
$G_p$ is the disjoint union of exactly $r$ connected stars.  Thus the
profile belongs to the revised class without counting an empty or zero-weight
type.  Any size-$k$ committee
underallocates some component below $d_j$, whose $d_j$-candidate deviation
scores $-1/(kr)$.  Conversely, allocate $d_j$ seats to every component but
one, and $d_j-1$ to the remaining component.  Its best local value is
$-1/(kr)$; every other local value is at most $-1/(kr)-1/k$ or
$-\infty$.  Factorization gives equality.  At $r=k+1$, this is the uniform
singleton endpoint with weight $1/(k+1)$.
\end{proof}

For $m\ge k\ge1$ and $h\ge1$, let
$\mathcal P^{\mathrm{act}}_{m,k,h}$ be the subfamily with exactly $m$
candidates.

\begin{theorem}[Fixed-candidate extremum]\label{thm:fixedm}
Fix $m,k,h$; candidate blocks, incidence, and normalized weights remain
variable within $\mathcal P^{\mathrm{act}}_{m,k,h}$.  Then
\[
 \boxed{\max_{P\in\mathcal P^{\mathrm{act}}_{m,k,h}}\Gamma(P)=
 \begin{cases}
  -1/k,&m=k,\\[2mm]
  -1/[k\min\{h,k+1\}],&m\ge k+1.
 \end{cases}}
\]
\end{theorem}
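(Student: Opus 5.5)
We split according to whether $m=k$ or $m\ge k+1$, using Theorem~\ref{thm:extremal} for the upper bound and explicit profiles for attainment.

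\textbf{The forced case $m=k$.} The only committee is $W=C$, and every candidate is elected. For any type $i$, $u_i(W)=|A_i|$, which is the largest utility any set can give, so no deviation has a strict supporter. Hence every deviation scores $-|T|/k\le-1/k$. The singleton baseline of Theorem~\ref{thm:coalition} attains equality, so $\Gamma(P)=-1/k$ for every profile in the class. The class is nonempty: a single weight-one type approving one candidate, with the other $k-1$ candidates null, is a one-component chain profile with $m=k$. So the maximum is $-1/k$.

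\textbf{The case $m\ge k+1$, upper bound.} Put $r=\min\{h,k+1\}$. Since $\mathcal P^{\mathrm{act}}_{m,k,h}\subseteq\mathcal P^{\mathrm{act}}_{k,h}$, Theorem~\ref{thm:extremal} already gives $\Gamma(P)\le-1/(kr)$ on the subfamily. (When $q_{\mathrm{act}}=0$ the value is $-1/k\le-1/(kr)$ as well.)

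\textbf{The case $m\ge k+1$, attainment.} The work is to realize the extremal value with exactly $m$ candidates. For $r\ge2$, we reuse the construction in the proof of Theorem~\ref{thm:extremal}: $r$ stars with $d_1+\cdots+d_r=k+1$ approved candidates and weights (20). This uses exactly $k+1$ approved candidates, and we add $m-k-1\ge0$ null candidates to one block. Null candidates do not enter $G_p$, so the profile stays in the class. The only thing to recheck is whether padding changes the value. Null candidates give no utility to anyone, so including them in a committee only wastes seats, which cannot lower $\Gamma$. Concretely:
\begin{itemize}
\item The lower-bound argument is unchanged. Any size-$k$ committee contains at most $k$ of the $k+1$ approved candidates, so it underallocates some component below $d_j$. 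That component's $d_j$-candidate deviation is supported by mass $p_j$ and scores $p_j-d_j/k=-1/(kr)$.
\item The upper bound is already supplied by Theorem~\ref{thm:strong}.
\end{itemize}
Equivalently, by Theorem~\ref{thm:closed} a padding candidate changes only $B_j$ and not the finite table entries, so the order-statistic formula (15) is unaffected.

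For $r=1$, the target value is $-1/k$. We take one type of weight one approving a single candidate, and add $m-1\ge k$ null candidates in the same block. A committee containing the approved candidate saturates that type, so $\Gamma=-1/k$.

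\textbf{Main obstacle.} The delicate point is the boundary $m=k+1$ with $r\ge2$. There the construction must use all $k+1$ candidates as approved ones with no padding, so we must confirm that $d_j\ge1$ summing to $k+1$ is feasible. It is, because $r\le k+1$. We must also confirm that the class permits $h\ge r$ components, which holds by the definition of $r$.
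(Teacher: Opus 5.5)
Your proposal is correct and follows the paper's proof essentially step for step. It has the same three parts: the saturation argument for $m=k$, the upper bound from the active-component bound (you cite it through Theorem~\ref{thm:extremal}, which rests on Theorem~\ref{thm:strong}), and the $(k+1)$-candidate star witness padded with $m-k-1$ null candidates for $r\ge2$. The one difference is your $r=1$ witness, a single approved candidate plus null candidates instead of one type approving exactly $k$ candidates; either choice saturates the only active type and gives $-1/k$.
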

\begin{proof}
If $m=k$, the unique committee is $C$.  For every ballot and every
$T\subseteq C$, $|A_i\cap T|\le|A_i|=|A_i\cap C|$, so no voter strictly
improves; the singleton no-support baseline gives $\Gamma=-1/k$.

Let $m\ge k+1$ and $r=\min\{h,k+1\}$.  Theorem~\ref{thm:strong} gives the
upper bound.  If $r\ge2$, use the preceding construction on its exactly
$k+1$ active candidates and assign the remaining $m-(k+1)$ null candidates
to existing blocks.  This changes neither $G_p$ nor the attaining committee,
and any size-$k$ committee still omits an active candidate.  If $r=1$, use
one weight-one type approving exactly $k$ candidates and make the remaining
$m-k$ candidates null.  Selecting all approved candidates attains $-1/k$.
\end{proof}

For a subfamily with fixed candidate blocks or incidence, the upper bound
still holds, while equality follows if that subfamily contains the displayed
witness for some positive composition $d_1+\cdots+d_r=k+1$.  This is a
sufficient witness condition, not a classification of necessary capacities.
For $r=1$, no extra active-candidate capacity is required beyond the model.

\begin{corollary}[Exactly $h$ active components]\label{cor:exacth}
For fixed $k,h\ge1$, let $\mathcal E_{k,h}$ consist of all finite
candidate-disjoint chain profiles with $|C|\ge k$ and $q_{\mathrm{act}}=h$.
Candidate count, blocks, incidence, and normalized weights may vary.  Then
\[
 \sup_{P\in\mathcal E_{k,h}}\Gamma(P)
 =-\frac1{k\min\{h,k+1\}}.
\]
For $h\le k+1$, the supremum is attained.  For $h>k+1$, the family below
approaches it; nonattainment in the full class is not asserted.
\end{corollary}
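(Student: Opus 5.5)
The proof has an upper bound and a lower bound.

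\emph{Upper bound.} Every $P\in\mathcal E_{k,h}$ has $q_{\mathrm{act}}=h\ge1$, so Theorem~\ref{thm:strong} applies directly and gives
\[
 \Gamma(P)\le-\frac{1}{k\min\{h,k+1\}}.
\]
The supremum is therefore at most this value. What remains is attainment for $h\le k+1$ and approximation for $h>k+1$.

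\emph{Attainment for $h\le k+1$.} Here $r=h$. For $h=1$, take one weight-one type approving a single candidate and add $k-1$ null candidates. Then $q_{\mathrm{act}}=1$, and by Theorem~\ref{thm:extremal} the value is $-1/k$.

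For $2\le h\le k+1$, reuse the construction (20) from Theorem~\ref{thm:extremal}. Choose a positive composition $d_1+\cdots+d_h=k+1$; it exists because $h\le k+1$. Then $G_p$ is a disjoint union of exactly $h$ stars, so $q_{\mathrm{act}}=h$ exactly. The computation in that proof gives $\Gamma=-1/(kh)$.

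\emph{Approximation for $h>k+1$.} Here $r=k+1$ and the target is $-1/(k(k+1))$. I would start from the uniform singleton endpoint: $k+1$ components, each a single type of weight $(1-\delta)/(k+1)$ approving one candidate. To reach exactly $h$ active components, add $h-k-1$ further singleton components. Each extra component gets one candidate and one type of positive weight $\delta/(h-k-1)$.

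To evaluate $\Gamma$, I would use Proposition~\ref{prop:order}. Each component has $|D_j|=1$, so $\mathcal M$ consists of one entry $F_j(0)=p_j-1/k$ per component. The $k+1$ heavy entries are the largest, so the $(k+1)$st largest member is
\[
 \theta_{k+1}=\frac{1-\delta}{k+1}-\frac1k,
\]
provided $\delta$ is small enough that the light entries stay below it. Then $\Gamma=\max\{-1/k,\theta_{k+1}\}$, and as $\delta\to0$ this tends to $-1/(k(k+1))$ from below. Together with the upper bound, this gives the supremum.

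The main obstacle is this approximating family. First, every extra component must carry strictly positive weight, or it would not count toward $q_{\mathrm{act}}$; the construction handles this with weight $\delta/(h-k-1)>0$. Second, the order-statistic identification must hold, in particular the ordering of heavy versus light entries. This needs $\delta/(h-k-1)<(1-\delta)/(k+1)$, which holds for small $\delta$.

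I would not try to decide whether the supremum is attained when $h>k+1$, matching the caveat in the statement.
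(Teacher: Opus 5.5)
Your proposal is correct and follows the paper's proof essentially step for step. You take the upper bound from Theorem~\ref{thm:strong}, the witnesses of Theorem~\ref{thm:extremal} for $h\le k+1$, and for $h>k+1$ the same family of $k+1$ heavy and $h-k-1$ light positive-weight singleton components; your light mass $\delta/(h-k-1)$ is only a reparametrization of the paper's $\delta$. The one cosmetic difference is that you evaluate $\Gamma$ for that family with the order-statistic formula of Proposition~\ref{prop:order}, whereas the paper argues directly that every singleton mass is below $1/k$, so the best deviation is the heaviest unselected singleton.
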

\begin{proof}
Theorem~\ref{thm:strong} gives the upper bound.  When $h\le k+1$, the
construction in Theorem~\ref{thm:extremal} has exactly $h$ active components
and attains equality.
For $h>k+1$, put $\ell=h-k-1$.  Take $k+1$
principal singleton components of mass
\[
 a_\delta=\frac{1-\ell\delta}{k+1}
\]
and $\ell$ additional singleton components of mass $\delta$, where
$0<\delta\le1/h$.  Every singleton mass is below $1/k$, and
$a_\delta\ge\delta$.  A best deviation against a committee is therefore a
largest-weight unselected singleton.  At least one principal is unselected,
and selecting $k$ principals attains
\[
 \Gamma(P_\delta)=a_\delta-\frac1k
 =-\frac1{k(k+1)}-\frac{\ell\delta}{k+1}.
\tag{21}
\]
Every ballot is nonempty and positive, so $G_p$ has exactly $h$ one-edge
components.  Hence the exact supremum on the class with exactly $h$
support-active components is $-1/[k(k+1)]$.
\end{proof}

\begin{lemma}[Continuity]\label{lem:continuity}
For fixed finite incidence and fixed $k$, the map $p\mapsto\Gamma(P)$ is
continuous and piecewise linear.
\end{lemma}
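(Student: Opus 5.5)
The plan is to write $\Gamma$ as a min over finitely many committees of a max over finitely many deviations of functions that are affine in $p$, and then use the fact that finite min/max compositions of continuous piecewise-linear functions stay continuous and piecewise linear.

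Fix the candidate set $C$, the committee size $k$, and the incidence data $(A_i)_{i\in N}$. The weight vector $p$ then ranges over the simplex $\Delta=\{p\in\mathbb R_{\ge0}^N:\sum_ip_i=1\}$. The key observation is that for fixed $W$ and $T$ the strict-supporter set
\[
 R_W(T)=\{i:u_i(T)>u_i(W)\}
\]
depends only on the incidence data, not on $p$. Hence
\[
 g_p(W,T)=\sum_{i\in R_W(T)}p_i-|T|/k
\]
is an affine function of $p$, with coefficient vector the indicator of $R_W(T)$ and constant term $-|T|/k$.

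The argument then proceeds in three steps.
\begin{itemize}
\item[(i)] The families of size-$k$ committees $W$ and admissible deviations $T$ are finite and independent of $p$.
\item[(ii)] Therefore $g_p(W)=\max_Tg_p(W,T)$ is a maximum of finitely many affine functions of $p$. It is convex, continuous, and piecewise linear on $\Delta$.
\item[(iii)] $\Gamma(P)=\min_Wg_p(W)$ is a minimum of finitely many such functions. It is continuous because finite min and max preserve continuity. It is piecewise linear because, at every $p$, its value equals one of the finitely many affine pieces $g_p(W,T)$.
\end{itemize}

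To make ``piecewise linear'' concrete, I will exhibit the pieces. For each pair $(W,T)$, let $\Delta_{W,T}$ be the set of $p\in\Delta$ at which $g_p(W,T)\ge g_p(W,T')$ for all $T'$ and $g_p(W)\le g_p(W')$ for all $W'$. Each such set is a polyhedron cut out by finitely many linear inequalities. These polyhedra cover $\Delta$, and on each one $\Gamma$ coincides with a single affine map. The same bound also yields Lipschitz continuity with respect to $\|\cdot\|_1$: each affine piece has coefficients in $\{0,1\}$, so it moves by at most $\|p-p'\|_1$, and finite max/min preserve this. This is presumably the seed of the later total-variation statement, though it is not needed here.

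There is no real obstacle. The only point needing care is that $R_W(T)$ is determined by incidence alone, so the pieces do not change with $p$. In particular, zero-weight and empty-ballot types simply contribute coefficients multiplying $p_i$, with no discontinuity when $p_i$ reaches $0$. This is also why the lemma fixes incidence: if we instead allowed identical approval sets to merge or split, or passed to $G_p$ and $q_{\mathrm{act}}$, the active structure could jump as weights hit zero. On a fixed incidence, however, $\Gamma$ remains continuous.
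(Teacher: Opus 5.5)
Your proposal is correct and is essentially the paper's own proof. Each score $g_p(W,T)$ is affine in $p$ because $R_W(T)$ depends only on the incidence data, and $\Gamma$ is a finite minimum over committees of finite maxima over deviations. One minor slip in your optional explicit description: $\Delta_{W,T}$ need not be a single polyhedron, since the condition $g_p(W)\le g_p(W')$ compares two maxima of affine maps. It is, however, a finite union of polyhedra, and you get single polyhedra by also fixing a maximizing deviation for every $W'$, so the conclusion is unaffected.
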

\begin{proof}
For fixed $W,T$, the score is affine in $p$.  The value for $W$ is a finite
maximum of affine functions, and $\Gamma$ is a finite minimum over
committees.  Finite minima and maxima preserve continuity and piecewise
linearity.
\end{proof}

\begin{theorem}[Total-variation stability]\label{thm:tv}
For any two probability distributions $p,p'$ on the full ballot space
$2^C$, with the same candidate set and committee size,
\[
 \boxed{|\Gamma(p)-\Gamma(p')|\le\operatorname{TV}(p,p')},
 \qquad
 \operatorname{TV}(p,p')=
 \sup_{E\subseteq2^C}|p(E)-p'(E)|
 =\frac12\lVert p-p'\rVert_1.
\]
No component, chain, or common-support assumption is required.
\end{theorem}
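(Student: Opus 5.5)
The plan is to prove the Lipschitz bound pointwise in the committee, then pass it through the outer minimum.

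For fixed $W$ and $T$, the score $g_p(W,T)=p(R_W(T))-|T|/k$ depends on $p$ only through the mass of the event
\[
 E_{W,T}=\{A\subseteq C: |A\cap T|>|A\cap W|\}.
\]
This event is a subset of the ballot space $2^C$. Indeed, the supporter set $R_W(T)$ is determined by each type's approval set alone. After aggregating types into the distribution $\mathbf{x}$ on $2^C$, we get $p(R_W(T))=p(E_{W,T})$. This aggregation is harmless, since identical ballots always support or fail to support together. Hence, by the definition of total variation as a supremum over events,
\[
 |g_p(W,T)-g_{p'}(W,T)|=|p(E_{W,T})-p'(E_{W,T})|\le\operatorname{TV}(p,p').
\]
The quota term $|T|/k$ cancels because the candidate set and $k$ are shared. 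No chain, component, or support assumption enters.

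Next, I will use the elementary fact that $|\max_a f(a)-\max_a f'(a)|\le\max_a|f(a)-f'(a)|$ over a common finite index set, and likewise for minima. The admissible deviations $T$ form the same finite family for $p$ and $p'$, as do the size-$k$ committees $W$. So one application of the fact gives $|g_p(W)-g_{p'}(W)|\le\operatorname{TV}(p,p')$ for each $W$. A second application, to the minimum defining $\Gamma$, yields the boxed inequality.

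Finally, I will check the identity $\sup_E|p(E)-p'(E)|=\tfrac12\lVert p-p'\rVert_1$. This is standard: take $E=\{A:p(A)>p'(A)\}$ and use that both total masses equal one.

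The only point needing care is the first step. We must confirm that the paper's type-indexed $p$ and the distribution on $2^C$ give the same $\Gamma$, so that TV on $2^C$ is the right metric. This holds because every score depends only on aggregated ballot masses. The remaining steps are routine.
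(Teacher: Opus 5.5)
Your proposal is correct and follows essentially the same approach as the paper: you bound each fixed-$(W,T)$ score difference by the TV mass of the event $E_{W,T}$, then pass the bound through the maximum over deviations and the minimum over committees. Your extra remarks on aggregating types into a distribution on $2^C$ and on the identity $\sup_E|p(E)-p'(E)|=\tfrac12\lVert p-p'\rVert_1$ are harmless additions that the paper leaves implicit.
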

\begin{proof}
For fixed $W,T$, the strict supporters form the event
$E_{W,T}=\{A\subseteq C:|A\cap T|>|A\cap W|\}$.  Hence the two deviation
scores differ by at most $\operatorname{TV}(p,p')$.  For finite families,
\[
 |\max_x a_x-\max_x b_x|\le\max_x|a_x-b_x|,
 \qquad
 |\min_x a_x-\min_x b_x|\le\max_x|a_x-b_x|.
\]
Taking first the maximum over deviations and then the minimum over committees
proves the claim.
\end{proof}

If $\Gamma(p)\le-\delta$ and
$\operatorname{TV}(p,p')<\delta$, then
$\Gamma(p')\le\Gamma(p)+\operatorname{TV}(p,p')<0$, so $p'$ has nonempty
core.  In particular, Theorem~\ref{thm:strong} gives the explicit radius
$\delta=1/[k\min\{q_{\mathrm{act}},k+1\}]$ around any chain profile with
$q_{\mathrm{act}}\ge1$, even when the perturbed profile is not a chain.  No
optimality of this radius is asserted.

Thus (19) is a maximum on the capacity-unrestricted closed class
$\mathcal P^{\mathrm{act}}_{k,h}$.  Corollary~\ref{cor:exacth} gives the
exactly-$h$ supremum, with the approximating family (21) available in that
varying-profile class.  Theorem~\ref{thm:fixedm} records the exact fixed-candidate
boundary, including the forced baseline at $m=k$.

\section{Candidate intervals and related work}
Within each nonempty active chain, order candidates by successive difference
layers and append $C_j\setminus D_j$; order an all-null block arbitrarily.
Every local approval set is a prefix of its block.  Concatenating the blocks
makes every approval set an interval in one global candidate order, with the
empty set an interval vacuously.  The present chain domain is therefore a
subclass of candidate-interval preferences.

Pierczy\'nski and Skowron define candidate-interval preferences and prove
polynomial-time core nonemptiness for finite profiles
~\cite{pierczynski2021}.  Qualitative nonemptiness of our domain was thus
known, and follows for rational type weights by replication; it is not a new
claim.  Their theorem does not state the signed local table (12), the exact
factorization (4), the canonical active-component constant (17), or the
total-variation radius obtained from that constant.

Berker et al. formulate the sizewise least core~\cite{berker2026}.  Their
Theorem~2 supplies the equal-singleton construction for the outer extremum,
while their Theorems~3 and~5 supply the $m=k+1$ upper side.  Together these
results already contain the $h=k+1$, $m=k+1$ endpoint of
Theorems~\ref{thm:extremal} and~\ref{thm:fixedm}, which we do not relabel as
new; they do not state our support-active refinement for arbitrary $m$.
Approval-based apportionment permits repeated party seats and proves core
existence and polynomial computation in that model~\cite{brill2024}; its
clone interpretation does not yield our local table.  Ferrers, or chain,
graphs encode bipartite graphs with neighborhoods ordered by inclusion
~\cite[Definition~1]{gyorffy2025}.  Recent work studies Pareto-optimal
committees under candidate intervals~\cite{schunke2026}.  Polynomial-time
Thiele optimization on voter-interval profiles concerns a different interval
orientation and objective~\cite{manurangsi2026}.  These results do not give
the universal utility vector (10) or its strict-support least-core
consequence.  Parameter results for
$k\le8$, $m\le15$, and at most seven voter types are incomparable with the
chain hypothesis~\cite{peters2025,becker2026}.  Laminar profiles and
acyclic voter--candidate incidence impose different structure
~\cite{petersskowron2022,haret2024}.  Standard knapsack and min-plus
recurrences explain the Bellman mechanics~\cite{cygan2017,bateni2018}.
Exact least-core algorithms are known for particular transferable-utility
games~\cite{hatano2017,fang2016}, while additive game compositions can have
Cartesian-product cores~\cite{solymosi2024}.  Their payoff variables and
inessential cross-component coalitions differ from globally seat-capped
approval deviations, so they do not imply (3), (12), or (17).  The
order-statistic extraction is a standard merge of decreasing sequences once
the chain-specific nonpositive reduction has been proved; the
total-variation step is the standard stability of finite maxima and minima.

\section{Exact regression checks}
Example~\ref{ex:twochains} connects the local tables to an explicit optimal
seat allocation.  Two independently implemented exact checkers provide
complementary finite regression tests of the general results.
The strengthened checker compares local tables with (12)--(13), global
values with factorization, and allocations with both DP branches.
The focused checker tests the intrinsic support graph, extremal witnesses,
heap reconstruction, and total-variation stability.
Both use exact rational arithmetic and explicitly test empty ballots,
zero weights, null candidates, and saturation at the deviation cap.
These checks supplement, but do not replace, the analytic proofs.

\begin{table}[ht]
\centering\small
\begin{tabular}{@{}lrrrrr@{}}
\toprule
Profile & $m$ & $k$ & $n$ & $q_{\mathrm{act}}$ & $\Gamma$ \\
\midrule
A & 16 & 10 & 50 & 5 & $-1/50$ \\
B & 16 & 10 & 80 & 8 & $-1/80$ \\
C & 16 & 11 & 16 & 4 & $-1/44$ \\
\bottomrule
\end{tabular}
\caption{Exact values for three regression benchmarks.}
\end{table}

All three benchmarks attain the active-component bound (17).
Appendix~\ref{app:regression} reports the full score histograms and the
coverage of each checker.  Complete ballot lists are embedded in both
checker sources.

\section{Limitations and the first non-chain fork}
The closed formula depends on the prefix-dominance order.  It does not apply
to arbitrary candidate-interval profiles, overlapping components, signed
weights, weak improvement, Droop quota, non-unit candidate costs, or
deviations larger than $k$.  The extremal maximum is over the stated
varying-profile universe; in a fixed-incidence subfamily equality needs an
applicable sufficient witness.  Theorem~\ref{thm:fixedm} resolves fixed
candidate count, but not fixed incidence; in particular, $|C|=k$ forces the
baseline value $-1/k$.

A connected non-chain obstruction occurs with two positive-weight approvals
$\{a,b\}$ and $\{a,c\}$.  Their incidence graph is connected, so they cannot
be separated into candidate-disjoint chain blocks.
At two local seats, no committee simultaneously gives both types utility two,
and the prefix-dominance mechanism fails.

\begin{proposition}[Canonical rooted-fork baseline]\label{prop:fork}
Let $\varnothing\ne R$, let $C=R\sqcup\{b,c\}$, and suppose every ballot is
one of $\varnothing$, $R$, $R\cup\{b\}$, and $R\cup\{c\}$.  For every
nonnegative normalized weight vector and every $1\le k\le|C|$,
\[
 \Gamma(P)=-1/k.
\]
\end{proposition}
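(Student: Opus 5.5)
The plan is to prove the two inequalities separately, using Theorem~\ref{thm:coalition} only through its baseline term. The lower bound $\Gamma(P)\ge-1/k$ holds for every profile. Every size-$k$ committee $W$ is nonempty, and for $w\in W$ the singleton deviation $\{w\}$ has no strict supporter, so it scores exactly $-1/k$. Hence $g_p(W)\ge-1/k$ for every $W$.

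For the upper bound, it suffices to exhibit one committee $W$ against which every deviation $T$ with a strict supporter satisfies $p(R_W(T))\le(|T|-1)/k$. Deviations without supporters already score at most $-1/k$. Put $r=|R|\ge1$, so $|C|=r+2$ and $1\le k\le r+2$. Write $x_R,x_b,x_c$ for the masses of the ballots $R$, $R\cup\{b\}$, $R\cup\{c\}$. The empty ballot never supports anything, so it can be ignored. I would relabel so that $x_b\ge x_c$, which gives $x_c\le 1/2$. The argument then splits into three cases according to $k$ relative to $r$.

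\emph{Case $k\le r$.} Take any $W\subseteq R$ with $|W|=k$. Every nonempty ballot contains $R\supseteq W$, so each such voter has utility $k$. Every admissible deviation satisfies $|T|\le k$, so no voter can get utility above $k$. Thus no deviation has a strict supporter.

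\emph{Case $k=r+2$.} Here $W=C$ is forced, and, as in the proof of Theorem~\ref{thm:fixedm}, nobody can strictly improve.

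\emph{Case $k=r+1$.} This is the only case with a genuine supporter, and I expect it to be the main point of the proof. Take $W=R\cup\{b\}$. The ballots $R$ and $R\cup\{b\}$ are saturated by $W$, since their utility equals their approval size. A voter with ballot $R\cup\{c\}$ has utility $r$. To improve strictly, that voter needs $|T\cap(R\cup\{c\})|\ge r+1$, which forces $T=R\cup\{c\}$ and $|T|=k$. The only supported deviation therefore scores
\[
 x_c-1\le\tfrac12-1=-\tfrac12\le-\tfrac1k,
\]
where the last inequality uses $k=r+1\ge2$. Equivalently, $x_c\le1/2\le(k-1)/k$.

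In all three cases, $g_p(W)=-1/k$ for the exhibited committee. Combined with the lower bound, this gives $\Gamma(P)=-1/k$ for every nonnegative normalized weight vector. The choice of the heavier branch in the case $k=r+1$ is what makes the bound uniform in the weights. The other cases hold for every choice of $W$ of the stated form.
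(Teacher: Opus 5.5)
Your proposal is correct and takes essentially the same route as the paper. The no-support singleton gives the lower bound, and saturation settles the cases $k\le|R|$ and $k=|C|$. For $k=|R|+1$ you choose $W=R\cup\{b\}$ with $x_c\le x_b$, so the only supported deviation is $R\cup\{c\}$, whose score is $x_c-1\le-1/2\le-1/k$.
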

\begin{proof}
The no-support singleton gives the universal lower bound $-1/k$.  If
$k\le|R|$, choose any $k$ candidates from $R$.  Every nonempty type then has
utility $k$, which no admissible deviation can improve.  If $k=|C|$, choose
$C$ and every type is saturated.

It remains that $k=|R|+1\ge2$.  Let $x_b$ and $x_c$ be the total weights on
the two branch ballots.  Relabel the branches so that $x_c\le x_b$, hence
$x_c\le1/2$, and choose $W=R\cup\{b\}$.  Only the $c$-branch type can improve,
and it can do so only through $T=R\cup\{c\}$, whose score is $x_c-1$.
Therefore $g_p(W)=\max\{-1/k,x_c-1\}=-1/k$.
\end{proof}

Proposition~\ref{prop:fork} crosses the first non-chain fork only for two one-candidate
extensions of a common nonempty root.  It does not restore the chain table or
analyze arbitrary branch depths, overlaps, or additional incomparable sets.

The remaining fork project starts with arbitrary
$A_0\subsetneq A_1,A_2$ and incomparable branches.  The results neither
construct an empty-core election nor settle unrestricted
core nonemptiness.  They do not quantitatively solve all candidate-interval
profiles or the eight-type frontier.

\section{Conclusion}
We give a complete exact analysis of the sizewise-least-core value on the
stated candidate-disjoint nested-chain domain.  Local
scores factor, prefix-packed committees close every local table, and a
transition order statistic with heap reconstruction returns the exact
rational optimum and committee in polynomial time.  The same table proves
the optimal $1/[k\min\{q_{\mathrm{act}},k+1\}]$ per-profile guarantee.
Capacity-matched profiles attain the at-most-$h$ extremal value, the fixed-$m$
theorem identifies the sole $m=k$ branch, and an explicit positive family
determines the exactly-$h$ supremum beyond $k+1$ components.  Total-variation
stability promotes each negative margin to a neighborhood of core-nonempty
profiles.  Known candidate-interval theory supplies the qualitative context;
the contribution here is the exact quantitative decomposition.

\appendix
\section{Proof dependencies}
Theorem~\ref{thm:coalition} supplies the no-support baseline.
Theorems~\ref{thm:factor} and~\ref{thm:dp} use finite attainment,
disjointness, over-budget domination, and monotonicity.
Lemma~\ref{lem:prefix} supplies Theorem~\ref{thm:closed}; that closed table
drives Theorem~\ref{thm:algorithm}, Proposition~\ref{prop:order}, and the
demand contradiction in Theorem~\ref{thm:strong}.
Theorems~\ref{thm:extremal} and~\ref{thm:fixedm} add explicit capacity
witnesses; Corollary~\ref{cor:exacth} records the exactly-$h$ consequence.
Lemma~\ref{lem:continuity} controls boundary limits, and
Theorem~\ref{thm:tv} supplies structure-free perturbation stability.
Proposition~\ref{prop:fork} is a separate boundary argument using the
no-support baseline and direct utility saturation.
The public verification files record the finite checks and their scopes.

\section{Exact regression data and reproduction}\label{app:regression}
For each benchmark, the integer margin is $M(W)=nkg_p(W)$.
Run-length encoding (RLE) writes each margin followed by
its multiplicity as \texttt{value:count}.  For example,
\texttt{-10:5940} denotes 5,940 committees with integer margin $-10$.
\begin{table}[ht]
\centering\small
\begin{tabular}{@{}lrrrrl@{}}
\toprule
Profile & $m$ & $k$ & $n$ & $\Gamma$ & complete margin RLE \\
\midrule
A & 16 & 10 & 50 & $-1/50$ & $-10{:}5940,40{:}1980,80{:}22,90{:}66$ \\
B & 16 & 10 & 80 & $-1/80$ & $-10{:}5202,70{:}2610,140{:}195,210{:}1$ \\
C & 16 & 11 & 16 & $-1/44$ & $-4{:}3264,12{:}1056,28{:}48$ \\
\bottomrule
\end{tabular}
\caption{Complete exact regression histograms.}
\end{table}

The complete ballot lists and multiplicities are embedded in both checker
sources.  The strengthened checker uses \texttt{BENCHMARKS}; the focused
checker uses \texttt{benchmark\_profiles()}.

Benchmark A has five support-active chain components and attains
$-1/(10\cdot5)$.  Benchmark B consists of eight full-set components whose
closed tables encode demands $(2,2,2,1,1,1,1,1)$ and attains
$-1/(10\cdot8)$.  Benchmark C has four two-level chains and attains
$-1/(11\cdot4)$.  Thus all three are exact examples of (12), (17), the
allocation DP, and Proposition~\ref{prop:order}, rather than evidence for the
analytic proofs.

The public strengthened checker is a stand-alone implementation that imports
no manuscript proof module.  It uses \texttt{frozenset} objects, combination enumeration, and
\texttt{fractions.Fraction}.  It directly
compares brute local tables with (12)--(13), compares brute global values with
factorization and both DP branches, and checks the strengthened bound,
sharpness, continuity samples, exact positive-component approximation, and
Benchmarks A--C.  Its current bound check excludes declared blocks containing
only empty or zero-weight ballots from $q_{\mathrm{act}}$.  Its declared
domains cover $k=1,\ldots,4$,
$h=1,\ldots,5$, chain lengths through four, gaps, duplicates, empty and
zero-weight types, zero-mass components, null padding, and components larger
than $k$.  These finite checks are falsification and regression evidence
only.
The sealed run passed 41,556 exact assertions.  From the
\texttt{verification} directory, the canonical commands are
\texttt{python strengthened\_tests.py} and
\texttt{python strengthened\_checker.py}.  Its public sealed JSON is
deterministic and byte-reproducible.

The public focused checker uses integer
bit masks and a separately implemented brute-force objective.  It tested the
intrinsic support graph, sharp
active-component and fixed-$m$ witnesses, the order statistic and heap
reconstruction, and total-variation stability.  It passed 351,647 exact
assertions, including 1,913 small chain-profile/committee-size cases,
224,149 exhaustive distribution pairs, 87,669 strict robust-core
implications, and exact heap witnesses for Benchmarks A--C.  Its eleven unit
tests in \path{minimal_revision_tests.py} also pass.  The focused checker
directly tests the support graph, Theorems~\ref{thm:strong}--\ref{thm:tv}, and
Proposition~\ref{prop:order}; the strengthened checker primarily tests the
local tables, factorization, DP, and Theorems~\ref{thm:algorithm}--
\ref{thm:extremal}.  Their coverage overlaps but is not identical, and all
finite checks remain regression evidence.

\section*{Data and code availability}
All code, exact verification artifacts, and reproducibility instructions
supporting this paper are available at
\url{https://github.com/Baymax-ray/sizewise-least-core}.

\section*{Use of AI-assisted tools}
\begingroup\emergencystretch=1em
OpenAI GPT-5.6 Sol was used during
exploratory proof search, computational prototyping, literature-query
formulation, and manuscript revision.  All statements, proofs, citations,
and computational artifacts included in the final manuscript were reviewed
by the author.  Exact computational claims are accompanied by independently
implemented verification code.  The author takes full responsibility for
the content of the paper.
\par\endgroup


\begin{thebibliography}{16}
\footnotesize
\setlength{\itemsep}{0pt}
\bibitem{pierczynski2021}
G.~Pierczy\'nski and P.~Skowron,
\emph{Core-Stable Committees under Restricted Domains},
Web and Internet Economics (WINE 2022), LNCS 13778, 311--329, 2022,
doi:10.1007/978-3-031-22832-2\_18.
\bibitem{berker2026}
R.~E.~Berker, E.~Tewolde, V.~Conitzer, M.~Guo, M.~Heule, and L.~Xia,
\emph{On the Edge of Core (Non-)Emptiness: An Automated Reasoning Approach
to Approval-Based Multi-Winner Voting}, AAAI 40(20), 16673--16681, 2026;
doi:10.1609/aaai.v40i20.38709.
\bibitem{peters2025}
D.~Peters, \emph{The Core of Approval-Based Committee Elections with Few
Seats}, Proceedings of IJCAI 2025, 4014--4022, 2025,
doi:10.24963/ijcai.2025/447; full version: arXiv:2501.18304v2.
\bibitem{becker2026}
P.~Becker, M.~Greger, and D.~Peters,
\emph{Core Existence in Approval-Based Committee Elections with up to Seven
Voter Types}, arXiv:2605.06194v2, 2026.
\bibitem{brill2024}
M.~Brill, P.~G\"olz, D.~Peters, U.~Schmidt-Kraepelin, and K.~Wilker,
\emph{Approval-based apportionment}, Mathematical Programming 203 (2024),
77--105.
\bibitem{cygan2017}
M.~Cygan, M.~Mucha, K.~W\k{e}grzycki, and M.~W\l{}odarczyk,
\emph{On Problems Equivalent to (min,+)-Convolution}, arXiv:1702.07669, 2017.
\bibitem{bateni2018}
M.~H.~Bateni, M.~T.~Hajiaghayi, S.~Seddighin, and C.~Stein,
\emph{Fast Algorithms for Knapsack via Convolution and Prediction},
arXiv:1811.12554, 2018.
\bibitem{haret2024}
A.~Haret, S.~Klumper, J.~Maly, and G.~Sch\"afer,
\emph{Committees and Equilibria: Multiwinner Approval Voting Through the
Lens of Budgeting Games}, EC 2024, 51--70.
\bibitem{petersskowron2022}
D.~Peters and P.~Skowron,
\emph{Proportionality and the Limits of Welfarism},
arXiv:1911.11747v3, 2022.
\bibitem{schunke2026}
J.~Sch\"unke,
\emph{Pareto Optimality in Approval-Based Multiwinner Voting},
arXiv:2605.30490v1, 2026.
\bibitem{gyorffy2025}
L.~Gy\H{o}rffy, A.~London, G.~V.~Nagy, and A.~Pluh\'ar,
\emph{Partitions of Graphs into Special Bipartite Graphs},
Discrete Mathematics \& Theoretical Computer Science 27(3), 2025;
doi:10.46298/dmtcs.15361.
\bibitem{manurangsi2026}
P.~Manurangsi and K.~Sornat,
\emph{Polynomial-Time Algorithm for Thiele Voting Rules with Voter Interval
Preferences}, arXiv:2604.05953v1, 2026.
\bibitem{hatano2017}
D.~Hatano and Y.~Yoshida,
\emph{Computing Least Cores of Supermodular Cooperative Games},
AAAI 31(1), 2017; doi:10.1609/aaai.v31i1.10583.
\bibitem{fang2016}
Q.~Fang, B.~Li, X.~Sun, J.~Zhang, and J.~Zhang,
\emph{Computing the Least-Core and Nucleolus for Threshold Cardinality
Matching Games}, Theoretical Computer Science 609, 500--510, 2016;
doi:10.1016/j.tcs.2015.11.015.
\bibitem{solymosi2024}
T.~Solymosi,
\emph{Assignment Games with Population Monotonic Allocation Schemes},
Social Choice and Welfare 62, 67--88, 2024;
doi:10.1007/s00355-023-01477-z.
\end{thebibliography}
\end{document}